\documentclass[
aps,
reprint,
superscriptaddress,
nofootinbib,onecolumn
]{revtex4-2}%
\usepackage{amsmath}
\usepackage{amsfonts}
\usepackage{amsthm}
\usepackage{amssymb}
\usepackage{graphicx}%
\usepackage[svgnames]{xcolor}
\usepackage[colorlinks=true,citecolor=blue]{hyperref}
\usepackage{url}

\newtheorem{proposition}{Proposition}
\usepackage{empheq}
\providecommand{\U}[1]{\protect\rule{.1in}{.1in}}
\begin{document}
\title{Extremal curves of Perlick's proper time in Weyl geometry}

\author{E. Rodrigues}
\email{dota.jp13@gmail.com}
\affiliation{
Departamento de F\'{\i}sica,
Universidade Federal da Para\'{\i}ba,
Caixa Postal 5008,
58059-970 Jo\~ao Pessoa, PB, Brazil
}

\author{F. Dahia}
\email{fdahia@fisica.ufpb.br}
\affiliation{
Departamento de F\'{\i}sica,
Universidade Federal da Para\'{\i}ba,
Caixa Postal 5008,
58059-970 Jo\~ao Pessoa, PB, Brazil
}

\author{I. P. Lobo}
\email{iarley\_lobo@fisica.ufpb.br}
\affiliation{
Departamento de F\'{\i}sica,
Universidade Federal da Para\'{\i}ba,
Caixa Postal 5008,
58059-970 Jo\~ao Pessoa, PB, Brazil
}
\affiliation{
Physics Department,
Federal University of Campina Grande,
Caixa Postal 10071,
58429-900 Campina Grande, Para\'{\i}ba, Brazil
}

\author{C. Romero}
\email{cromero@fisica.ufpb.br}
\affiliation{
Departamento de F\'{\i}sica,
Universidade Federal da Para\'{\i}ba,
Caixa Postal 5008,
58059-970 Jo\~ao Pessoa, PB, Brazil
}

\begin{abstract}
We investigate the extremization of Perlick's proper time in non-integrable Weyl geometry. We derive the corresponding generalized Euler--Lagrange equations and show that the resulting extremals do not, in general, coincide with the autoparallels of the Weyl connection, their difference being governed by the Weyl length curvature and vanishing in the integrable case. A distinctive feature of the extremal equation is its nonlocal character: in the Weyl proper-time parametrization, the acceleration depends explicitly on the remaining proper time to the endpoint of the variational interval. We illustrate this behavior for a weak constant Weyl field and show that a local Lorentz-force dynamics emerges in a double-scaling limit in which the Weyl field vanishes and the terminal proper time diverges while their product remains finite. These results uncover a nontrivial relation between proper-time extremization, Weyl non-integrability, and local force dynamics.
\end{abstract}

%\pacs{04.20.Jb, 11.10.kk, 98.80.Cq}
\maketitle

Keywords: Weyl geometry; proper time; variational principles; nonlocal dynamics.

\section{Introduction}

Weyl geometry provides one of the simplest extensions of pseudo-Riemannian geometry
in which the comparison of lengths at different spacetime points becomes path
dependent. In addition to the metric tensor $g_{\mu\nu}$, the geometry is
characterized by a $1$-form $\sigma_{\mu}$, which governs the change of
lengths under parallel transport. Its exterior derivative $d\sigma$, which in
local coordinates has the components $F_{\mu\nu}=\partial_{\mu}\sigma_{\nu
}-\partial_{\nu}\sigma_{\mu}$, measures the non-integrability of the length
transport and vanishes in the integrable Weyl case. Historically, this
structure was introduced by Weyl in an attempt to provide a geometrical
description of gravitation and electromagnetism \cite{Weyl1918}, and it later
acquired a broader significance in some geometric approaches to gravitation
and spacetime structure.

Weyl geometry also arises naturally in the axiomatic construction of spacetime
proposed by Ehlers, Pirani and Schild \cite{EPS1972}. In the EPS framework,
light propagation determines a conformal structure, while the trajectories of
freely falling particles determine a projective structure. Their compatibility
leads, in general, to a Weyl geometry rather than directly to a
pseudo-Riemannian one. In this setting, the operational definition of proper
time becomes particularly important, since the metric norm of the tangent
vector is not, by itself, invariant under Weyl transformations. Related to the
EPS approach, we would like to refer the reader to some results in the
literature that explore into the Weylian structures
\cite{Perlick1987,Koehler1978,Perlick1991,CastagninoHarari1983,Audretsch:1984hd,Audretsch:1991aj}%
.

In particular, a Weyl-invariant notion of proper time was proposed by Perlick
\cite{Perlick1987}. His construction provides a characterization of standard
clocks that is compatible with the Weyl gauge structure and has subsequently
been discussed in connection with the second clock effect and its possible
phenomenological consequences
\cite{Avalos2018,LoboRomero2018,Delhom:2020vpe,Khodadi:2026zoi}. Unlike the
usual pseudo-Riemannian arc length, however, Perlick's proper-time functional
contains an accumulated contribution of the Weyl $1$-form along the worldline.
This raises the natural  question: What curves extremize this proper time, and
how are they related to the autoparallels of the Weyl connection?

In pseudo-Riemannian geometry, the extremization of proper time leads, up to
parametrization, to the geodesic equation. One might therefore expect an
analogous correspondence in Weyl geometry. We show that this expectation fails
in the non-integrable case, i.e. when $d\sigma$ is not a closed form. If
$\sigma$ is an exact form, then we say that we have a \textit{Weyl integrable
spacetime (WIST) }\cite{Novello1992}. The extremals of Perlick's proper time
satisfy a generalized Euler--Lagrange equation containing an additional
contribution proportional to the so-called \textit{Weyl length curvature}.
(Let us recall that this kind of curvature was introduced by Weyl and was
identified to the Faraday tensor of Electromagnetism  \cite{Weyl1918}.) As a
consequence, they do not coincide, in general, with the autoparallels of the
Weyl connection, as we will show in this paper. The two notions of motion agree only when the relevant
curvature contribution vanishes. The quest for the existence of a variational
principle for Weyl geodesics has also been recently explored in
\cite{Heisenberg:2026ess}.

A distinctive feature of the equation which gives the extremal of Perlick's
proper time is its nonlocal structure. The acceleration at a given point
depends on an integral over the remaining segment of the worldline up to the
final endpoint of the variational interval. In the Weyl proper-time
parametrization, this dependence simplifies and becomes explicitly
proportional to the remaining proper-time interval. The resulting dynamics
therefore has a \textit{terminal }character: the
local equation retains information about the future boundary entering the
original variational problem. This feature should be understood primarily as a
property of the boundary-value formulation, although it becomes conceptually
relevant if the resulting extremal equation is interpreted as a fundamental
equation of motion.

We analyze this structure explicitly in a simple $1+1$ dimensional model with
constant Weyl length curvature. In the weak-field regime, the trajectories can
be obtained analytically and compared with both Minkowski geodesics and
ordinary Weyl autoparallels. This comparison makes it possible to separate the
effect of the Weyl connection from the additional contribution generated by
the proper-time extremization. The full nonlinear equations are also solved
numerically, confirming the perturbative behavior.

The dependence on the final endpoint leads to a second interesting limit.
Simply increasing the terminal proper time at fixed Weyl field enhances,
rather than suppresses, the nonlocal contribution. However, a nontrivial local
limit is obtained by simultaneously weakening the Weyl field and moving the
terminal point to infinity while keeping their product fixed. In this
double-scaling regime, the direct corrections associated with the Weyl
connection disappear, whereas the curvature-dependent contribution remains
finite. The resulting equation coincides with the relativistic Lorentz-force
equation. We illustrate this mechanism numerically and show explicitly the
convergence of the Perlick trajectories towards Lorentz-force motion.

The paper is organized as follows. We first review the relevant elements of
Weyl geometry and Perlick's definition of proper time. We then formulate the
generalized variational problem and derive the corresponding Euler--Lagrange
equations. Their specialization to Weyl geometry leads to a nonlocal equation
governing the extremals of Perlick's proper time, whose terminal structure is
subsequently analyzed. We then consider a weak constant Weyl field and compare
the resulting trajectories with Minkowski geodesics and Weyl autoparallels.
Finally, we investigate the large terminal-time behavior and show how the
Lorentz force emerges in the appropriate double-scaling limit.

%%%%%%%%%%%%%%%%%%%%%%%%%%%%%%%%%%%%%%%%%%%%%%%%%%%%%%%%%%%%%%%%%%%%%%%%%%%%%%%%%%%%%%%%%%%%%%%%%%%%%%%%%%%%%%%%

\section{Perlick's proper time}

\label{sec:perlick}

In a Weyl spacetime, the usual pseudo-Riemannian prescription for measuring
proper time requires some care.\footnote{We adopt the notation of
\cite{Spivak1999,ONeill1983}.} The geometry is characterized by a metric $g$, a
$1$-form $\sigma$, and a torsion-free affine connection satisfying
\begin{equation}
\nabla_{\lambda}g_{\mu\nu}=\sigma_{\lambda}g_{\mu\nu}%
.\label{eq:weylcompatibility}%
\end{equation}
Under a Weyl transformation, the metric and the Weyl $1$-form transform as
$g\rightarrow e^{\phi}g$ and $\sigma\rightarrow\sigma+d\phi$, while the affine
connection remains unchanged. Thus, although a particular metric
representative can be used to perform calculations, physical prescriptions
should ultimately be independent of this choice. This becomes particularly
relevant when defining the time measured by an observer. In pseudo-Riemannian
geometry, proper time is directly associated with the metric length of a
timelike curve. In a general Weyl geometry, however, the norm of a vector
changes under parallel transport according to the Weyl $1$-form. Consequently,
the usual metric arc length cannot be adopted without taking into account the
additional geometric structure. Perlick addressed this problem by giving an
operational characterization of proper time in terms of standard clocks
\cite{Perlick1987}. Here we briefly revisit his procedure in the form
developed in Ref.~\cite{Avalos2018}, which will lead to the functional to be
extremized in the following sections.

A timelike curve $\gamma$ parametrized by $\tau$ is called a standard clock if
its acceleration is orthogonal to its tangent vector, that is,
\begin{equation}
g\left(  \dot{\gamma},\frac{D\dot{\gamma}}{d\tau}\right)
=0.\label{eq:standardclock}%
\end{equation}
Proper time is then identified with a parametrization for which the worldline
satisfies Eq.~\eqref{eq:standardclock}. This condition is particularly
convenient in Weyl geometry because orthogonality is preserved under conformal
transformations, while the affine connection, and hence the covariant
acceleration, is invariant under Weyl transformations. The standard-clock
condition therefore does not depend on the representative $(g,\sigma)$ chosen
within the Weyl conformal struture.

Consider now a timelike curve $\gamma(t)$ written in terms of an arbitrary
parameter $t$, and suppose that $\tau=\tau(t)$ is a reparametrization that
turns it into a standard clock. Applying Eq.~\eqref{eq:standardclock} to the
reparametrized curve gives \footnote{In this paper our convention for the
metric signature is $(---+)$. }
\begin{equation}
\frac{d^{2}\tau}{dt^{2}}-\frac{g\left(  \dot{\gamma},D\dot{\gamma}/dt\right)
}{g(\dot{\gamma},\dot{\gamma})}\frac{d\tau}{dt}=0.\label{eq:perlick-reparam}%
\end{equation}
Thus, the problem of defining proper time along an arbitrary timelike
worldline reduces to determining the reparametrization that solves Eq.~\eqref{eq:perlick-reparam}.

The Weyl compatibility condition allows this equation to be integrated
explicitly. Along the curve one has
\[
\frac{d}{dt}\ln\!\left[  -g(\dot{\gamma},\dot{\gamma})\right]  =2\frac
{g\left(  \dot{\gamma},D\dot{\gamma}/dt\right)  }{g(\dot{\gamma},\dot{\gamma
})}+\sigma_{\mu}\dot{x}^{\mu}.
\]
Using this relation in Eq.~\eqref{eq:perlick-reparam}, the equation for the
proper-time parametrization becomes
\[
\frac{d^{2}\tau}{dt^{2}}-\frac{1}{2}\left\{  \frac{d}{dt}\ln\!\left[
-g(\dot{\gamma},\dot{\gamma})\right]  -\sigma_{\mu}\dot{x}^{\mu}\right\}
\frac{d\tau}{dt}=0.
\]
Introducing $\psi=d\tau/dt$, this is reduced to a first-order linear equation
and can be immediately integrated. The result is
\begin{equation}
\frac{d\tau}{dt}=C\sqrt{-g_{\mu\nu}\dot{x}^{\mu}\dot{x}^{\nu}}\exp\left[
-\frac{1}{2}\int_{t_{0}}^{t}\sigma_{\mu}(x(s))\dot{x}^{\mu}(s)\,ds\right]
,\label{eq:perlick-differential}%
\end{equation}
where $C$ is determined by the normalization of the clock at the reference
point $t_{0}$.

A second integration gives the elapsed proper time between $t_{0}$ and $t$,
\begin{equation}
\Delta\tau(t)=C\int_{t_{0}}^{t}\exp\left[  -\frac{1}{2}\int_{t_{0}}^{u}%
\sigma_{\mu}(x(s))\dot{x}^{\mu}(s)\,ds\right]  \sqrt{-g_{\mu\nu}(x(u))\dot
{x}^{\mu}(u)\dot{x}^{\nu}(u)}\,du.\label{eq:perlick-proper-time}%
\end{equation}
The constant $C$ fixes the overall scale of the clock and will play no role in
the extremization considered below and is given by $\left.  d\tau
/dt\right\vert _{t=t_{0}}\left[  -g\bigl(\dot{\gamma}(t_{0}),\dot{\gamma
}(t_{0})\bigr)\right]  ^{-1/2}$.

Equation~\eqref{eq:perlick-proper-time} differs from the usual pseudo-Riemannian
proper time by the exponential factor involving the integral of the Weyl
$1$-form along the worldline. This factor compensates for the change of the
metric norm under Weyl transformations and incorporates the history of Weyl
transport between the reference point $x(t_{0})$ and the point $x(u)$. The
expression is consequently invariant under both Weyl transformations and
reparametrizations of the curve. When $\sigma=0$, the exponential factor
becomes unity and Eq.~\eqref{eq:perlick-proper-time} reduces to the familiar
relativistic proper-time functional.

The mathematical and operational properties of this definition, including its
relation to standard clocks, additivity, and the second clock effect, have
been discussed in detail in Ref.~\cite{Avalos2018}. We shall not revisit those
issues here. Instead, our starting point will be
Eq.~\eqref{eq:perlick-proper-time} itself. Once Perlick's proper time is
regarded as a functional of the worldline, a natural question arises: which
curves make this functional stationary? In ordinary pseudo-Riemannian geometry
this procedure leads to the geodesic equation. In the following sections, we
investigate the corresponding variational problem in a general Weyl geometry
and compare its extremals with the autoparallels of the Weyl connection.

%%%%%%%%%%%%%%%%%%%%%%%%%%%%%%%%%%%%%%%%%%%%%%%%%%%%%%%%%%%%%%%%%%%%%%%%%%%%%%%%%%%%%%%%%%%%%%%%%%%%%%%%%%%%%%%%

\section{Variation of the general Perlick functional}

\label{sec:var}

At this point, we shall work directly with a coordinate system, so we will
refer to $x^{\alpha}=x^{\alpha}(s)$ as a curve on $M$, parametrized by
$s\in\mathbb{R}$, and a dot denotes differentiation with respect to this
parameter. The proper time \eqref{eq:perlick-proper-time} defines a
functional, which we shall vary:
\begin{equation}
S[x^{\alpha}(s)]=\int_{a}^{b}\exp\left(  \int_{a}^{s}l\bigl(x^{\alpha}%
(t),\dot{x}^{\alpha}(t)\bigr)\,dt\right)  L\bigl(x^{\alpha}(s),\dot{x}%
^{\alpha}(s)\bigr)\,ds.\label{eq:new_perlick}%
\end{equation}
The function
\begin{equation}
L(x,\dot{x})=\frac{\left.  d\tau/dt\right\vert _{t=t_{0}}}{\left[
-g\bigl(\dot{\gamma}(t_{0}),\dot{\gamma}(t_{0})\bigr)\right]  ^{1/2}}\left[
-g\bigl(\dot{\gamma}(s),\dot{\gamma}(s)\bigr)\right]  ^{1/2}%
\end{equation}
is the local part of the Lagrangian, whereas
\begin{equation}
l(x,\dot{x})=-\frac{1}{2}\sigma_{\alpha}(x)\dot{x}^{\alpha}=-\frac{1}{2}%
\sigma\bigl(\dot{\gamma}(s)\bigr)
\end{equation}
determines its nonlocal part.

We introduce a one-parameter family of curves $\eta^{\alpha}(s,\varepsilon)$
satisfying
\begin{equation}
\eta^{\alpha}(s,0)=x^{\alpha}(s), \qquad\eta^{\alpha}(a,\varepsilon
)=x^{\alpha}(a), \qquad\eta^{\alpha}(b,\varepsilon)=x^{\alpha}(b).
\end{equation}
The corresponding variation vector field is defined by
\begin{equation}
\xi^{\alpha}(s) \equiv\left.  \frac{\partial\eta^{\alpha}(s,\varepsilon)}
{\partial\varepsilon} \right|  _{\varepsilon=0}, \qquad\xi^{\alpha}%
(a)=\xi^{\alpha}(b)=0.
\end{equation}
For each $\varepsilon\in(-\epsilon,\epsilon)\subset\mathbb{R}$, the map
$s\mapsto\eta^{\alpha}(s,\varepsilon)$ represents a possible trajectory.

The functional evaluated along the varied curve is
\begin{equation}
S[\eta^{\alpha}(s,\varepsilon)] = \int_{a}^{b} \exp\left[  \int_{a}^{s}
l\bigl(\eta^{\alpha}(t,\varepsilon), \dot{\eta}^{\alpha}(t,\varepsilon
)\bigr)\,dt \right]  L\bigl(\eta^{\alpha}(s,\varepsilon), \dot{\eta}^{\alpha
}(s,\varepsilon)\bigr)\,ds.
\end{equation}
To extremize this functional, we calculate $\partial S/\partial\varepsilon$.
The details of this calculation are presented in Appendix \ref{app:details}. A
direct computation gives
\begin{align}
\frac{\partial S}{\partial\varepsilon} ={}  &  \int_{a}^{b} \exp\left(
\int_{a}^{s} l\bigl(\eta(t,\varepsilon),\dot{\eta}(t,\varepsilon)\bigr)\,dt
\right)  \Bigg\{ \Bigg[ \int_{a}^{s} \left(  \frac{\partial l}{\partial
\eta^{\alpha}} - \frac{d}{dt} \frac{\partial l}{\partial\dot{\eta}^{\alpha}}
\right)  \frac{\partial\eta^{\alpha}}{\partial\varepsilon}\,dt +
\frac{\partial l}{\partial\dot{\eta}^{\alpha}} \frac{\partial\eta^{\alpha}%
}{\partial\varepsilon} \Bigg]_{(s,\varepsilon)} L\nonumber\\
&  \qquad\qquad+ \left[  \frac{\partial L}{\partial\eta^{\alpha}} - \frac
{d}{ds} \left(  \frac{\partial L}{\partial\dot{\eta}^{\alpha}} \right)  -
l\frac{\partial L}{\partial\dot{\eta}^{\alpha}} \right]  _{(s,\varepsilon)}
\frac{\partial\eta^{\alpha}}{\partial\varepsilon} \Bigg\}\,ds + \int_{a}^{b}
\frac{d}{ds} \left[  \exp\left(  \int_{a}^{s} l\bigl(\eta,\dot{\eta}\bigr)\,dt
\right)  \frac{\partial L}{\partial\dot{\eta}^{\alpha}} \frac{\partial
\eta^{\alpha}}{\partial\varepsilon} \right]  ds .
\end{align}
Since the variation vanishes at the endpoints, the total derivative term does
not contribute. Evaluating the remaining terms at $\varepsilon=0$, we obtain
\begin{align}
\left.  \frac{\partial S}{\partial\varepsilon} \right|  _{\varepsilon=0} ={}
&  \int_{a}^{b} \exp\left[  \int_{a}^{s} l\bigl(x(t),\dot{x}(t)\bigr)\,dt
\right]  \Bigg\{ \left[  \int_{a}^{s} \left(  \frac{\partial l}{\partial
x^{\alpha}} - \frac{d}{dt} \frac{\partial l}{\partial\dot{x}^{\alpha}}
\right)  _{\!t} \xi^{\alpha}(t)\,dt \right]  L\nonumber\\
&  \hspace{2cm} + \left[  \frac{\partial L}{\partial x^{\alpha}} - \frac
{d}{ds} \left(  \frac{\partial L}{\partial\dot{x}^{\alpha}} \right)  +
\frac{\partial l}{\partial\dot{x}^{\alpha}}L - l\frac{\partial L}{\partial
\dot{x}^{\alpha}} \right]  _{\!s} \xi^{\alpha}(s) \Bigg\}\,ds.
\label{eq:first-variation-components}%
\end{align}
In Appendix \ref{app:general-var}, we establish the result for an arbitrary
variation. For the moment, consider a variation localized at $s=\lambda$ in a
fixed coordinate direction, namely
\begin{equation}
\xi^{\alpha}(s) = v^{\alpha}\delta(s-\lambda),
\end{equation}
where $v^{\alpha}$ is an arbitrary constant vector. The equations of motion
then take the form%

\begin{equation}
\frac{\partial L}{\partial x^{\alpha}}-\frac{d}{dt}\left(  \frac{\partial
L}{\partial\dot{x}^{\alpha}}\right)  +\frac{\partial l}{\partial\dot
{x}^{\alpha}}L-l\frac{\partial L}{\partial\dot{x}^{\alpha}}=\left[  \frac
{d}{dt}\left(  \frac{\partial l}{\partial\dot{x}^{\alpha}}\right)
-\frac{\partial l}{\partial x^{\alpha}}\right]  \int_{t}^{b}\exp\left(
\int_{t}^{s}l(x,\dot{x})\,du\right)  L(x,\dot{x})\,ds.\label{eq:main}%
\end{equation}
This equation holds for each component $\alpha$. In Appendix \ref{app:details}%
, we prove that it constitutes a necessary and sufficient condition for the
extremization of the Perlick functional \eqref{eq:new_perlick}. It generalizes
the Euler--Lagrange equations for a nonlocal Lagrangian of the kind
\eqref{eq:new_perlick}. This equation can be applied for the Weyl case, which
will be the topic of the next section, but also to the generalization of
Perlick functional to the case of general non-metricity
\cite{Delhom:2020vpe,Delhom:2019yeo}.

%%%%%%%%%%%%%%%%%%%%%%%%%%%%%%%%%%%%%%%%%%%%%%%%%%%%%%%%%%%%%%%%%%%%%%%%%%%%%%%%

\section{Equation of motion in Weyl geometry}

We now specialize the generalized variational principle to Weyl geometry by
choosing $l(x,\dot{x})=-\frac{1}{2}\sigma_{\mu}(x)\dot{x}^{\mu}$, where
$\sigma_{\mu}$ is the Weyl $1$-form. The corresponding Euler--Lagrange
derivative is $\frac{d}{dt}\left(  \frac{\partial l}{\partial\dot{x}^{\alpha}%
}\right)  -\frac{\partial l}{\partial x^{\alpha}}=\frac{1}{2}F_{\alpha\beta
}\dot{x}^{\beta}$, where $F_{\mu\nu}=\partial_{\mu}\sigma_{\nu}-\partial_{\nu
}\sigma_{\mu}$ is the Weyl length-curvature $2$-form.

Introducing $\rho=\sqrt{-g_{\mu\nu}\dot{x}^{\mu}\dot{x}^{\nu}}$, a
straightforward computation shows that the generalized Euler--Lagrange
equations become
\begin{equation}
\frac{\widehat{D}\dot{x}^{\lambda}}{dt}=\left(  \frac{\dot{\rho}}{\rho}%
-\frac{1}{2}\sigma_{\mu}\dot{x}^{\mu}\right)  \dot{x}^{\lambda}+\frac{\rho}%
{2}\mathcal{J}(t)F^{\lambda}{}_{\beta}\dot{x}^{\beta},\label{eq:weylintegro}%
\end{equation}
where $\widehat{D}$ denotes the covariant derivative associated with the Weyl
connection
\[
\widehat{\Gamma}^{\lambda}{}_{\mu\nu}=\Gamma^{\lambda}{}_{\mu\nu}-\frac{1}%
{2}\left(  \delta_{\mu}^{\lambda}\sigma_{\nu}+\delta_{\nu}^{\lambda}%
\sigma_{\mu}-g_{\mu\nu}\sigma^{\lambda}\right)  ,
\]
and
\begin{equation}
\mathcal{J}(t)=\int_{t}^{b}\exp\left[  -\frac{1}{2}\int_{t}^{s}\sigma_{\mu
}(x(u))\dot{x}^{\mu}(u)\,du\right]  \rho(s)\,ds.\label{eq:J}%
\end{equation}

Equation~\eqref{eq:weylintegro} is a nonlinear integro-differential equation.
Its differential part is of second order in the trajectory, whereas its
right-hand side contains the nonlocal functional~\eqref{eq:J}. At a given
value of the parameter $t$, the quantity $\mathcal{J}(t)$ depends on the
values of the trajectory over the entire interval $[t,b]$. Consequently, the
acceleration at $t$ cannot, in general, be determined solely from the
instantaneous values of $x^{\mu}(t)$ and $\dot{x}^{\mu}(t)$, but also depends
on the remaining segment of the worldline up to the final endpoint.

The exponential factor $\exp\!\left[  -\frac{1}{2}\int_{t}^{s}\sigma_{\mu
}dx^{\mu}\right]  $ has a natural interpretation in Weyl geometry: it
represents the scale factor associated with Weyl transport between the points
$x(t)$ and $x(s)$. Accordingly, $\mathcal{J}(t)$ may be viewed as the
remaining metric length of the worldline weighted by the accumulated Weyl
rescaling along the curve. The nonlocal contribution therefore couples the
local Weyl length curvature $F_{\mu\nu}$ to a global property of the trajectory.

The nonlocal behavior disappears in the Weyl integrable case
\cite{Novello1992,SalimSautu1996,OliveiraSalimSautu1997,Melnikov1995,Bronnikov1995,Miritzis2004,AguilarRomero2009a,AguilarRomero2009b,Miritzis2013,PoulisSalim2014,Vazirian2015,Pucheu2016,Scholz:2017pfo,Poulis:2013rma,Almeida:2013dba}%
, in which $\sigma=d\phi$, where $\phi$ is a scalar field. In this case, the
Perlick time is given by an invariant prescription described in
\cite{Romero:2012hs}, and the trajectories are Weyl geodesics.

We may simplify the equation considerably by choosing the Weyl-invariant
proper time as the parameter along the worldline. Let $\tau$ denote this
parameter and define $u^{\mu}=dx^{\mu}/d\tau$ and $\rho=\sqrt{-g_{\mu\nu
}u^{\mu}u^{\nu}}$. By definition, the Weyl proper time satisfies
\[
\rho(\tau)\exp\left[  -\frac{1}{2}\int_{\tau_{0}}^{\tau}\sigma_{\mu}u^{\mu
}\,d\sigma\right]  =\mathrm{const}.
\]
The normalization of $\tau$ allows this constant to be chosen equal to unity.
Taking the logarithmic derivative then gives $\rho^{-1}d\rho/d\tau=\sigma
_{\mu}u^{\mu}/2$. Therefore, the term parallel to the tangent vector in
Eq.~\eqref{eq:weylintegro} vanishes, and the equation of motion reduces to
\begin{equation}
\frac{\widehat{D}u^{\lambda}}{d\tau}=\frac{\rho}{2}\mathcal{J}(\tau
)F^{\lambda}{}_{\beta}u^{\beta}.\label{eq:weyl-proper-J}%
\end{equation}

The nonlocal functional can also be evaluated explicitly in this
parametrization. Replacing $t$ by $\tau$ in Eq.~\eqref{eq:J}, we have
\begin{equation}
\mathcal{J}(\tau) = \rho(\tau) \left(  \tau_{b}-\tau\right)  .
\label{eq:J-proper-solution}%
\end{equation}

Substituting Eq.~\eqref{eq:J-proper-solution} into
Eq.~\eqref{eq:weyl-proper-J}, we finally obtain
\begin{equation}
 \frac{\widehat D u^\lambda}{d\tau} = \frac{\rho^2}{2} \left( \tau_b-\tau \right) F^\lambda{}_\beta u^\beta.
\label{eq:weyl-proper-nonlocal}%
\end{equation}

Thus, the use of Weyl proper time eliminates the tangential contribution and
allows the nonlocal integral to be carried out explicitly. Nevertheless, the
dependence on the final endpoint is not removed: it survives through the
factor $\tau_{b}-\tau$. In particular, the curvature-dependent contribution
vanishes at $\tau=\tau_{b}$. This makes explicit that the terminal character
of the original variational problem is not an artifact of an arbitrary
parametrization, but is inherited from the boundary structure of the nonlocal functional.

The dependence on the interval $[t,b]$ or $[\tau,\tau_{b}]$ gives the equation
an advanced, or terminal-value, character. In contrast with an ordinary local
equation of motion, specifying only the initial position and velocity is not
sufficient to determine the subsequent evolution. Instead, the complete
curve---or, equivalently, suitable terminal information---is required. This
does not make the variational problem mathematically inconsistent, since it
may still be interpreted as a boundary-value problem for extremal curves
joining two fixed endpoints. Nevertheless, it prevents a direct interpretation
of Eq.~\eqref{eq:weylintegro} as a conventional causal evolution equation.

The role of the final endpoint is reflected in the boundary condition
$\mathcal{J}(b)=0$, implying that the curvature-dependent force vanishes at
the endpoint selected in the variational problem. Since the point $b$ is not
distinguished by the local geometry, this behavior is determined entirely by
the boundary structure of the functional. Moreover, changing the endpoint
generally changes $\mathcal{J}(t)$ and therefore modifies the equation of
motion along the entire preceding segment of the worldline.

It is important to emphasize that this nonlocality cannot be removed by a
reparametrization of the curve. Indeed, the functional~\eqref{eq:J} is
invariant under regular orientation-preserving changes of the worldline
parameter, so that $\mathcal{J}(t)$ is associated with the geometric curve
itself rather than with its parametrization.

The integro-differential equation can nevertheless be rewritten as a local
differential system. Differentiating Eq.~\eqref{eq:J} with respect to $t$ and using Leibniz's rule, we obtain
\begin{align}
\dot{\mathcal J}(t)
&=
-\rho(t)
+
\int_t^b
\frac{\partial}{\partial t}
\exp\left[
-\frac12\int_t^s
\omega_\mu(x(u))\dot{x}^\mu(u)\,du
\right]
\rho(s)\,ds \nonumber\\
&=
-\rho(t)
+
\frac12\omega_\mu(x(t))\dot{x}^\mu(t)\mathcal J(t),
\end{align}
where we used
\[
\frac{\partial}{\partial t}
\int_t^s
\omega_\mu(x(u))\dot{x}^\mu(u)\,du
=
-\omega_\mu(x(t))\dot{x}^\mu(t).
\]
Hence,
\begin{equation}
\dot{\mathcal J}
=
\frac12\omega_\mu\dot{x}^\mu\mathcal J-\rho,
\qquad
\mathcal J(b)=0,
\end{equation}
with the terminal condition following directly from Eq.~\eqref{eq:J}.

Therefore, the integral may be replaced by an auxiliary scalar satisfying a
first-order differential equation. The price to pay is that the nonlocality is
transferred entirely to the terminal condition. Consequently, the system
remains exactly equivalent to the original variational problem, although the
evolution equations themselves become local.

%%%%%%%%%%%%%%%%%%%%%%%%%%%%%%%%%%%%%%%%%%%%%%%%%%%%%%%%%%%%%%%%%%%%%%%%%%%%%%%%%%%%%%

\subsection{Lorentz force limit of the nonlocal theory}

\label{subsec:lorentz-limit}

An interesting limit of the original nonlocal dynamics is obtained when the
Weyl $1$-form is weak while the final endpoint of the variational interval is
taken sufficiently distant. In this regime, the curvature-dependent
contribution can remain finite even though the Weyl field itself becomes
arbitrarily small. As we now show, an appropriate double-scaling limit
reproduces the ordinary Lorentz-force equation.

We assume, as in the original Weyl unified theory \cite{Weyl1918}, that the Weyl
$1$-form is proportional to an electromagnetic potential, $\sigma_{\mu
}=\lambda A_{\mu}$, where $\lambda$ parametrizes the strength of the Weyl
coupling. The corresponding length curvature is then $F_{\mu\nu}=\lambda
F_{\mu\nu}^{\mathrm{em}}$, with $F_{\mu\nu}^{\mathrm{em}}=\partial_{\mu}%
A_{\nu}-\partial_{\nu}A_{\mu}$.

In the Weyl proper-time parametrization, the equation of motion of the
original nonlocal theory is given by Eq.~\eqref{eq:weyl-proper-nonlocal}. With
the above identification, it becomes
\begin{equation}
\frac{\widehat D u^{\lambda}}{d\tau} = \frac{\lambda\rho^{2}}{2} (\tau
_{b}-\tau) F^{\mathrm{em}\,\lambda}{}_{\beta}u^{\beta}.
\label{eq:lorentz-limit-start}%
\end{equation}
This expression suggests considering the simultaneous limit
\begin{equation}
\lambda\rightarrow0, \qquad\tau_{b}\rightarrow\infty, \qquad\frac{\lambda
\tau_{b}}{2} \equiv\frac{q}{m} = \text{constant}. \label{eq:double-scaling}%
\end{equation}
For any finite proper-time interval, or more generally whenever $\tau/\tau
_{b}\ll1$, the coefficient appearing in Eq.~\eqref{eq:lorentz-limit-start} can
be written as
\[
\frac{\lambda}{2}(\tau_{b}-\tau) = \frac{q}{m} \left(  1-\frac{\tau}{\tau_{b}}
\right)  ,
\]
and therefore approaches $q/m$ in the limit \eqref{eq:double-scaling}.

At the same time, the remaining Weyl corrections disappear. Since $\sigma
_{\mu}=\mathcal{O}(\lambda)$, the Weyl connection approaches the Levi--Civita
connection as $\widehat{\Gamma}^{\lambda}{}_{\mu\nu}=\Gamma^{\lambda}{}%
_{\mu\nu}+\mathcal{O}(\lambda)$. Moreover, the Weyl proper-time condition
implies that $\rho=1+\mathcal{O}(\lambda)$ over any finite portion of the
worldline, assuming the usual normalization at the initial point.
Consequently, Eq.~\eqref{eq:lorentz-limit-start} reduces to
\begin{equation}
 \frac{D u^\lambda}{d\tau} = \frac{q}{m} F^{{\rm em}\,\lambda}{}_\beta u^\beta, \label{eq:lorentz-force-recovered}%
\end{equation}
which is precisely the relativistic Lorentz-force equation.

The ordinary electromagnetic dynamics can therefore be recovered from the
original Perlick extremals through a weak-field, large-endpoint limit. The
important point is that the local Weyl coupling tends to zero while the
product $\lambda\tau_{b}$ remains finite. In this sense, the finite
interaction strength results from the compensation between a weak local Weyl
field and the large interval entering the nonlocal variational problem.

For a large but finite $\tau_{b}$, Eq.~\eqref{eq:lorentz-limit-start} can
instead be interpreted as a Lorentz-force equation with a slowly varying
effective charge-to-mass ratio,
\begin{equation}
\left(  \frac{q}{m}\right)  _{\mathrm{eff}} = \frac{q}{m} \left(  1-\frac
{\tau}{\tau_{b}} \right)  . \label{eq:effective-charge}%
\end{equation}
Thus, ordinary Lorentz-force dynamics is approximately recovered whenever
$\tau\ll\tau_{b}$, with the leading departure characterized by the ratio
$\tau/\tau_{b}$. Additional corrections associated with the Weyl connection
and the normalization factor $\rho$ are suppressed by $\lambda$, which in the
double-scaling limit is itself of order $1/\tau_{b}$.

This limit provides an intriguing connection with Weyl's original motivation
for introducing the $1$-form $\sigma_{\mu}$. The antisymmetric length
curvature already has the tensorial structure of an electromagnetic field
strength, while the nonlocal proper-time extremization supplies an effective
coupling. However, this interpretation must be treated with some care. In the
original nonlocal theory, $\tau_{b}$ is the endpoint of the variational
interval rather than an intrinsic property of the particle. A direct
identification of $\lambda\tau_{b}/2$ with the physical ratio $q/m$ would
therefore require a physical prescription that fixes this combination
independently of an arbitrary choice of the endpoint. The strict
double-scaling limit avoids an explicit dependence on a finite endpoint within
any finite observation interval, but the physical origin of the fixed scale
$q/m$ would still need to be understood.

Nevertheless, Eq.~\eqref{eq:lorentz-force-recovered} shows that the Lorentz
force is contained as a well-defined scaling limit of the original nonlocal
Perlick dynamics. This provides an additional connection between the
variational structure studied here and Weyl's original geometrical
interpretation of electromagnetism.

%%%%%%%%%%%%%%%%%%%%%%%%%%%%%%%%%%%%%%%%%%%%%%%%%%%%%%%%%%%%%%%%%%%%%%%%%%%%%%%%%%%%%%%%%%%%%%%%%%%%%%%%%%%%%%%%%%%%%%%%%%%%%%%%%%%%%%%%%%%%%%%%%%%%%%%%%%%%%%%%%%%%%%%%%%%%%%%%%%%%%%%%%%

\section{Weak-field dynamics and the large terminal-time limit}

\label{sec:weak-field}

To illustrate the difference between Weyl autoparallels and the extremals of
Perlick's proper time, let us consider a simple constant-field configuration
in $1+1$ dimensions. We take a Minkowski representative, $g_{\mu\nu}=\eta
_{\mu\nu}=\mathrm{diag}(-1,1)$, and choose the Weyl $1$-form as $\sigma
_{0}=-Ex/2$ and $\sigma_{1}=Et/2$, so that the corresponding length curvature
is constant, $F_{01}=E$.

We consider a particle initially at rest, with $t(0)=x(0)=0$ and $u^{\mu
}(0)=(1,0)$, and use the Weyl proper time as parameter. In the absence of the
Weyl field, the trajectory reduces to the Minkowski geodesic
\begin{equation}
t_{\mathrm{M}}(\tau)=\tau, \qquad x_{\mathrm{M}}(\tau)=0.
\label{eq:minkowski-weak}%
\end{equation}

For a weak field, $E\ll1$, the spatial displacement is itself of order $E$.
Consequently, $\sigma_{\mu}u^{\mu}=\mathcal{O}(E^{2})$, while $\rho
=1+\mathcal{O}(E^{2})$ and $t(\tau)=\tau+\mathcal{O}(E^{2})$. The spatial
component of the Weyl autoparallel equation then becomes $du_{\mathrm{W}}%
^{1}/d\tau=E\tau/4+\mathcal{O}(E^{2})$. With the initial conditions above,
this gives
\begin{equation}
u_{\mathrm{W}}^{1}(\tau)=\frac{E}{8}\tau^{2},\qquad x_{\mathrm{W}}(\tau
)=\frac{E}{24}\tau^{3}+\mathcal{O}(E^{2}).\label{eq:weak-weyl-trajectory}%
\end{equation}
Thus, even though the representative metric is Minkowskian, the Weyl
connection produces a departure from inertial motion.

For the extremals of Perlick's proper time, the curvature-dependent
contribution must also be included. At the same order, the spatial equation
takes the form
\begin{equation}
\frac{du^{1}_{\mathrm{P}}}{d\tau} = \frac{E}{4}\tau- \frac{E}{2}(\tau_{b}%
-\tau) + \mathcal{O}(E^{2}) = E\left(  \frac{3}{4}\tau-\frac{\tau_{b}}{2}
\right)  + \mathcal{O}(E^{2}). \label{eq:weak-perlick-acceleration}%
\end{equation}
The first contribution originates from the Weyl connection, whereas the second
is associated with the extremization of Perlick's proper time and depends
explicitly on the remaining interval to the terminal point. Integration gives
\begin{equation}
u^{1}_{\mathrm{P}}(\tau) = E\left(  \frac{3}{8}\tau^{2}-\frac{\tau_{b}}{2}%
\tau\right)  , \qquad x_{\mathrm{P}}(\tau) = E\left(  \frac{\tau^{3}}{8} -
\frac{\tau_{b}\tau^{2}}{4} \right)  + \mathcal{O}(E^{2}).
\label{eq:weak-perlick-trajectory}%
\end{equation}

Equations~\eqref{eq:minkowski-weak}, \eqref{eq:weak-weyl-trajectory}, and
\eqref{eq:weak-perlick-trajectory} exhibit three distinct notions of motion.
The Minkowski trajectory remains at $x=0$, the Weyl autoparallel bends towards
positive $x$ for $E>0$, while the Perlick extremal initially bends in the
opposite direction, since $\ddot x_{\mathrm{P}}(0)=-E\tau_{b}/2$. The
difference between the latter two trajectories is
\[
x_{\mathrm{P}}(\tau)-x_{\mathrm{W}}(\tau) = E\left(  \frac{\tau^{3}}{12} -
\frac{\tau_{b}\tau^{2}}{4} \right)  + \mathcal{O}(E^{2}).
\]
This explicitly isolates the effect introduced by the proper-time extremization.

A particularly interesting feature of Eq.~\eqref{eq:weak-perlick-trajectory}
is the role played by the terminal time $\tau_{b}$. One might expect that
moving the terminal point farther into the future would suppress its influence
on the motion. At fixed Weyl field, however, precisely the opposite occurs.
For $\tau\ll\tau_{b}$, the dominant contribution is
\begin{equation}
x_{\mathrm{P}}(\tau) \simeq-\frac{E\tau_{b}}{4}\tau^{2},
\label{eq:large-taub-fixed-E}%
\end{equation}
so that the displacement grows linearly with $\tau_{b}$. Therefore, the
relevant parameter controlling the nonlocal correction is not $E$ alone, but
the combination $E\tau_{b}$.

Figure~\ref{fig:large-taub-comparison}(a) illustrates this behavior using the
complete nonlinear equations. The Weyl field is kept fixed while the terminal
time is increased. The Weyl autoparallel is insensitive to $\tau_{b}$, whereas
the Perlick trajectory departs progressively farther from both the Minkowski
and Weyl-autoparallel solutions. Thus, simply taking $\tau_{b}\rightarrow
\infty$ at fixed $E$ does not produce a local limit. Instead, it enhances the
terminal contribution and eventually invalidates a perturbative expansion
based solely on the assumption $E\ll1$.

\begin{figure}[t]
\centering
\begin{minipage}{0.48\textwidth}
\centering
\includegraphics[width=\linewidth]{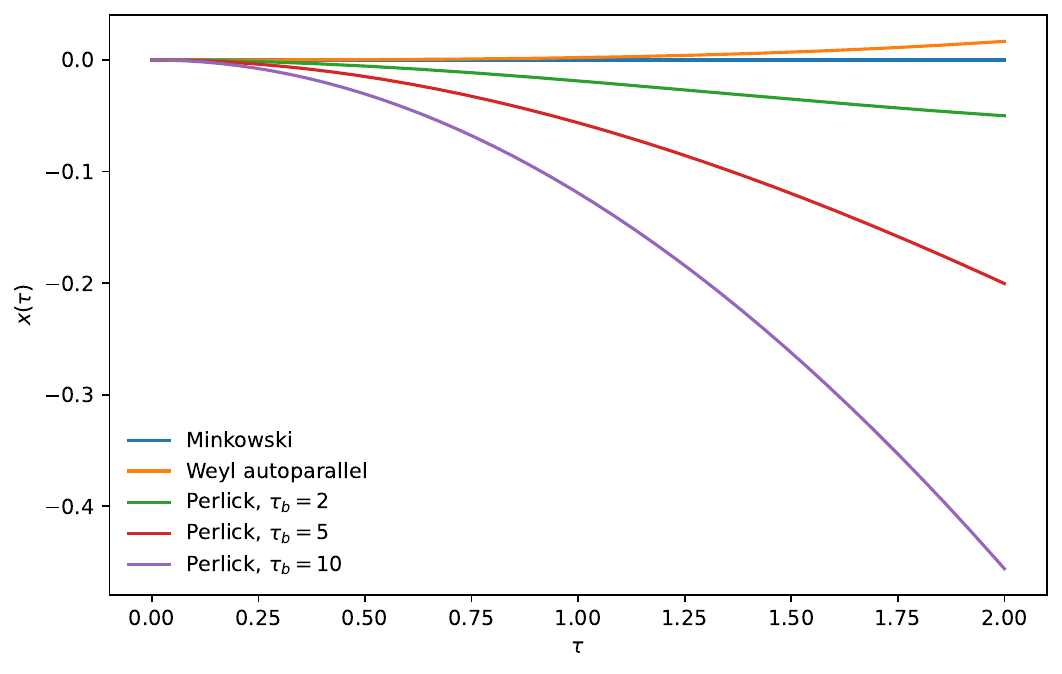}
\textbf{(a)}
\end{minipage}
\hfill\begin{minipage}{0.48\textwidth}
\centering
\includegraphics[width=\linewidth]{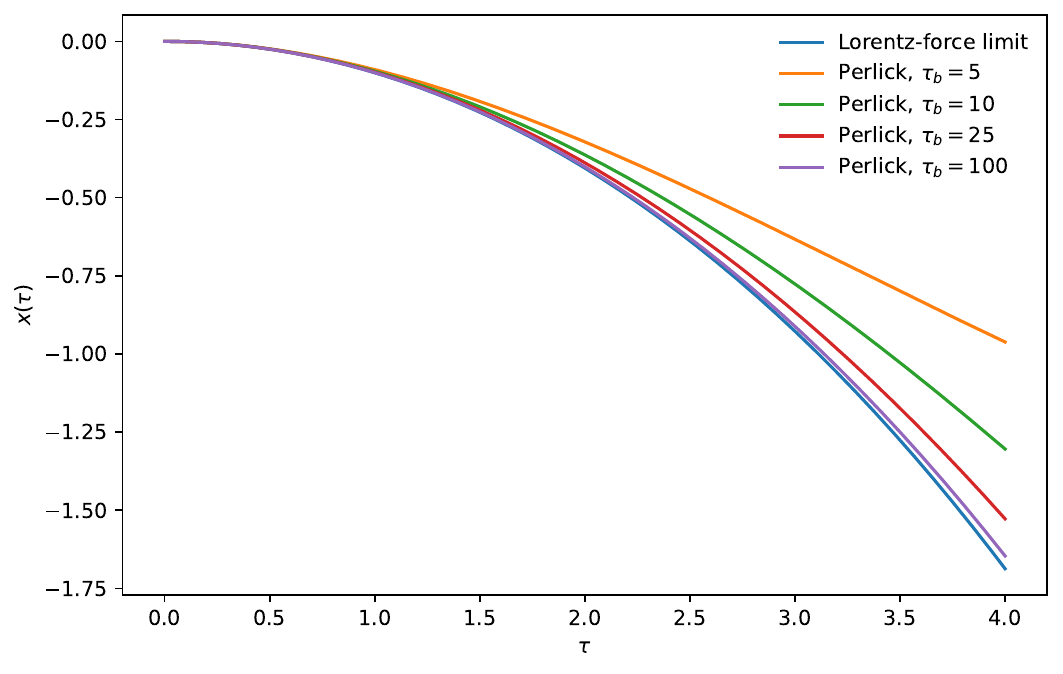}
\textbf{(b)}
\end{minipage}
\caption{ Effect of increasing the terminal proper time $\tau_{b}$ on the
Perlick trajectories. (a) At fixed Weyl field, increasing $\tau_{b}$ enhances
the curvature-dependent contribution and produces a progressively larger
departure from the Minkowski and Weyl-autoparallel trajectories. (b) In the
double-scaling regime, with $\lambda\tau_{b}/2=q/m=0.2$ held fixed, the Weyl
coupling decreases as $\tau_{b}$ increases and the Perlick trajectories
converge towards the ordinary Lorentz-force solution. }%
\label{fig:large-taub-comparison}%
\end{figure}

As discussed in the Lorentz force limit, there is, however, a qualitatively
different large-$\tau_{b}$ regime. If we write the Weyl $1$-form as
$\sigma_{\mu}=\lambda A_{\mu}$, and consider the simultaneous limit
$\lambda\rightarrow0$ and $\tau_{b}\rightarrow\infty$. The equation of motion
consequently approaches the ordinary Lorentz-force equation \eqref{eq:lorentz-force-recovered}.

This behavior is shown in Fig.~\ref{fig:large-taub-comparison}(b). Here the
effective coupling $q/m=\lambda\tau_{b}/2$ is held fixed while $\tau_{b}$ is
increased. The corresponding value of $\lambda$ therefore decreases for each
successive curve. In contrast with Fig.~\ref{fig:large-taub-comparison}(a),
the Perlick trajectories now converge towards a limiting trajectory. This
limiting curve is precisely the one generated by the Lorentz-force equation~\eqref{eq:lorentz-force-recovered}.

The comparison between Figs.~\ref{fig:large-taub-comparison}(a) and
\ref{fig:large-taub-comparison}(b) emphasizes that the large terminal-time
limit is not unique. A distant terminal point does not by itself remove the
nonlocality of the Perlick equation. At fixed Weyl curvature, its influence is
instead amplified. A finite local dynamics emerges only when the weakening of
the Weyl field compensates for the increasing terminal interval. In this
sense, the Lorentz force arises as a double-scaling limit in which the local
Weyl geometry becomes arbitrarily weak while the integrated effect associated
with the remaining worldline stays finite.

This result also provides a useful interpretation of the apparent terminal
character of the original equation. At finite $\tau_{b}$, the motion retains
information about the terminal boundary through the remaining proper-time
interval. In the double-scaling regime, however, this explicit dependence
disappears from observations performed over a finite interval: the terminal
point is pushed arbitrarily far away while its integrated contribution is
encoded in the finite constant $q/m$. The resulting dynamics is local and
indistinguishable, at leading order, from the motion of a charged particle in
an electromagnetic field.

\section{Perlick's geometry and non-locality}

At this point, it is interesting to note that Perlick's proposal leads to a
new kind of differential geometry. Indeed, one can view the equation (1) as a
prescription of how to define length of curves in an entire class of Weyl
manifolds. In other words, Perlick's proper time endows space-time with new
metric properties which are distinct from the pseudo-Riemannian metric of each member
of the class. As a matter of fact, in this paper we tried to answer the
following question: what are the \textquotedblleft geodesics\textquotedblright%
\ of this geometry? This question is both motivated by geometry and physics.
Indeed, one may be interested in knowing whether it is possible or not to
define \textquotedblleft distance\textquotedblright\ between points. Or, one
may regard the geodesics as describing the paths of freely falling particles.
In any case, the equation of the curve which extremizes Perlick's functional
was needed. Since the equation that define Perlick's proper time may be also
viewed as prescription to give a way to define length of curves, then we have
a new (non-local) geometry Taking thiw view the results we have obtained seems
to yield the following:

i) It is possible to obtain the equation of geodesic as the curve that
extremizes the functional defined by (\ref{eq:perlick-proper-time}).

ii) Perlick's geodesics do not coincide with the affine geodesics of Weyl geometry;

iii) It has a non-local character in the sense that it depends on the world
line of the particle. \ 

\section{Final remarks}

\label{sec:final}

In this work, we have investigated the extremization of Perlick's proper time
in a general Weyl geometry. We showed that, contrary to what occurs in the
pseudo-Riemannian case, the extremals of the proper-time functional do not
generally coincide with the autoparallels of the Weyl connection, except in
the integrable case. The difference is governed by the length curvature
$F_{\mu\nu}$ and therefore constitutes a genuine effect of the
non-integrability of the Weyl structure. In the integrable case, where
$F_{\mu\nu}=0$, this additional contribution disappears and the usual
correspondence is recovered.

The equation governing the extremals also exhibits an unusual nonlocal
structure. Through the quantity defined in Eq.~\eqref{eq:J}, the acceleration
at a given point depends on the remaining portion of the worldline up to the
terminal point of the variational interval. In the Weyl proper-time
parametrization, this dependence becomes particularly transparent, being
encoded in the remaining proper time $\tau_{b}-\tau$. The extremal problem
therefore has a terminal character. This does not by itself
imply retrocausal physics, since the variational principle is naturally
formulated as a boundary-value problem, but it distinguishes these extremals
from the trajectories generated by an ordinary local initial-value equation.

The weak constant-field example makes this distinction explicit. Minkowski
geodesics, Weyl autoparallels, and Perlick extremals describe different
trajectories even when they are assigned the same initial position and tangent
vector. Moreover, increasing the terminal time at fixed Weyl field does not
suppress the terminal contribution. On the contrary, as illustrated in
Fig.~\ref{fig:large-taub-comparison}(a), it enhances the departure of the
Perlick extremals from the local trajectories. Thus, a large terminal time
alone does not provide a local limit of the theory.

A different behavior emerges when the large-$\tau_{b}$ limit is accompanied by
a simultaneous weakening of the Weyl field. In the double-scaling regime
discussed in Sec.~\ref{sec:weak-field}, the direct effects of the Weyl
connection vanish while the product of the Weyl coupling and the terminal
proper time remains finite. The resulting dynamics approaches the ordinary
Lorentz force, as illustrated in Fig.~\ref{fig:large-taub-comparison}(b). This
provides an intriguing connection between the length curvature of Weyl
geometry and electromagnetic-like dynamics: a finite local interaction can
survive even when the underlying Weyl field becomes arbitrarily weak, due to
the compensating contribution of an increasingly distant terminal boundary.

The physical significance of this limit remains to be clarified. In
particular, identifying the surviving combination with an intrinsic
charge-to-mass ratio requires a physical prescription for the terminal scale,
which in the original variational problem is simply part of the chosen
interval. Nevertheless, the result shows that familiar local dynamics can
emerge as a limiting regime of the nonlocal equations obtained by extremizing
Perlick's proper time.

Finally, the nonlocal equations admit an equivalent local representation
through the introduction of auxiliary worldline variables, provided the
corresponding terminal condition is retained. Promoting these variables to
independent degrees of freedom and replacing the terminal condition by initial
data would instead define an enlarged local theory. Since such a construction
goes beyond the original Perlick variational problem, its formulation and
physical consequences will be investigated elsewhere.

%%%%%%%%%%%%%%%%%%%%%%%%%%%%%%%%%%%%%%%%%%%%%%%%%%%%%%%%%%%%%%%%%%%%%%%%%%%%%%%%%%%%%%

\section*{Acknowledgements}

\noindent C. Romero would like to CNPq (Brazil) for financial support. I. P.
L. acknowledges partial support from the National Council for Scientific and
Technological Development, CNPq, under grant 312547/2023-4. I. P. L.
acknowledges the networking support by the COST Action BridgeQG (CA23130), the
COST Action RQI (CA23115) and the COST Action FuSe (CA24101) supported by COST
(European Cooperation in Science and Technology).

\ 

\appendix

\section{The case of a delta variation}

\label{app:details}

We consider the functional evaluated along a one-parameter family of curves
$\eta^{\alpha}(s,\varepsilon)$:
\begin{equation}
S[\eta^{\alpha}(s,\varepsilon)] = \int_{a}^{b} \exp\left[  \int_{a}^{s}
l\bigl(\eta^{\alpha}(t,\varepsilon), \dot{\eta}^{\alpha}(t,\varepsilon
)\bigr)\,dt \right]  L\bigl(\eta^{\alpha}(s,\varepsilon), \dot{\eta}^{\alpha
}(s,\varepsilon)\bigr)\,ds.
\end{equation}
To extremize this functional, we must evaluate $\partial S/\partial
\varepsilon$. Differentiating under the integral sign gives
\begin{align}
\frac{\partial S}{\partial\varepsilon} ={}  &  \int_{a}^{b} \exp\left[
\int_{a}^{s} l(\eta,\dot{\eta})\,dt \right]  \Bigg\{ \left[  \frac{\partial
}{\partial\varepsilon} \int_{a}^{s} l(\eta,\dot{\eta})\,dt \right]
L(\eta,\dot{\eta}) + \frac{\partial L}{\partial\varepsilon}
\Bigg\}\,ds\nonumber\\
={}  &  \int_{a}^{b} \exp\left[  \int_{a}^{s} l(\eta,\dot{\eta})\,dt \right]
\Bigg\{ \left[  \int_{a}^{s} \left(  \frac{\partial l}{\partial\eta^{\alpha}}
\frac{\partial\eta^{\alpha}}{\partial\varepsilon} + \frac{\partial l}%
{\partial\dot{\eta}^{\alpha}} \frac{\partial\dot{\eta}^{\alpha}}%
{\partial\varepsilon} \right)  _{\!(t,\varepsilon)} dt \right]
L(s,\varepsilon) + \left[  \frac{\partial L}{\partial\eta^{\alpha}}
\frac{\partial\eta^{\alpha}}{\partial\varepsilon} + \frac{\partial L}%
{\partial\dot{\eta}^{\alpha}} \frac{\partial\dot{\eta}^{\alpha}}%
{\partial\varepsilon} \right]  _{\!(s,\varepsilon)} \Bigg\}\,ds .
\label{eq:variation-first-step}%
\end{align}
Repeated Greek indices are summed, and we use subscripts to emphasize the
functional dependence of the expressions. Using
\begin{equation}
\frac{\partial\dot{\eta}^{\alpha}}{\partial\varepsilon} = \frac{d}{ds} \left(
\frac{\partial\eta^{\alpha}}{\partial\varepsilon} \right)  ,
\end{equation}
we may write
\begin{align}
\frac{\partial l}{\partial\dot{\eta}^{\alpha}} \frac{\partial\dot{\eta
}^{\alpha}}{\partial\varepsilon} ={}  &  \frac{d}{dt} \left(  \frac{\partial
l}{\partial\dot{\eta}^{\alpha}} \frac{\partial\eta^{\alpha}}{\partial
\varepsilon} \right)  - \frac{d}{dt} \left(  \frac{\partial l}{\partial
\dot{\eta}^{\alpha}} \right)  \frac{\partial\eta^{\alpha}}{\partial
\varepsilon},\\
\frac{\partial L}{\partial\dot{\eta}^{\alpha}} \frac{\partial\dot{\eta
}^{\alpha}}{\partial\varepsilon} ={}  &  \frac{d}{ds} \left(  \frac{\partial
L}{\partial\dot{\eta}^{\alpha}} \frac{\partial\eta^{\alpha}}{\partial
\varepsilon} \right)  - \frac{d}{ds} \left(  \frac{\partial L}{\partial
\dot{\eta}^{\alpha}} \right)  \frac{\partial\eta^{\alpha}}{\partial
\varepsilon}.
\end{align}
Substituting these identities into \eqref{eq:variation-first-step}, we obtain
\begin{align}
\frac{\partial S}{\partial\varepsilon} ={}  &  \int_{a}^{b} \exp\left[
\int_{a}^{s} l(\eta,\dot{\eta})\,dt \right]  \Bigg\{ \Bigg[ \int_{a}^{s}
\left(  \frac{\partial l}{\partial\eta^{\alpha}} - \frac{d}{dt} \frac{\partial
l}{\partial\dot{\eta}^{\alpha}} \right)  \frac{\partial\eta^{\alpha}}%
{\partial\varepsilon}\,dt + \frac{\partial l}{\partial\dot{\eta}^{\alpha}}
\frac{\partial\eta^{\alpha}}{\partial\varepsilon} \Bigg]_{\!(s,\varepsilon)}
L(s,\varepsilon)\nonumber\\
&  \hspace{1.5cm} + \left[  \frac{\partial L}{\partial\eta^{\alpha}} -
\frac{d}{ds} \left(  \frac{\partial L}{\partial\dot{\eta}^{\alpha}} \right)  -
l\frac{\partial L}{\partial\dot{\eta}^{\alpha}} \right]  _{\!(s,\varepsilon)}
\frac{\partial\eta^{\alpha}}{\partial\varepsilon} \Bigg\}\,ds \int_{a}^{b}
\frac{d}{ds} \left\{  \exp\left[  \int_{a}^{s} l(\eta,\dot{\eta})\,dt \right]
\frac{\partial L}{\partial\dot{\eta}^{\alpha}} \frac{\partial\eta^{\alpha}%
}{\partial\varepsilon} \right\}  \,ds . \label{eq:general-var-before-boundary}%
\end{align}
The variation vector field along the original curve is
\begin{equation}
\xi^{\alpha}(s) \equiv\left.  \frac{\partial\eta^{\alpha}(s,\varepsilon)}
{\partial\varepsilon} \right|  _{\varepsilon=0}, \qquad\xi^{\alpha}%
(a)=\xi^{\alpha}(b)=0.
\end{equation}
Therefore, the total derivative term in \eqref{eq:general-var-before-boundary}
vanishes. Evaluating the remaining terms at $\varepsilon=0$, where
$\eta^{\alpha}(s,0)=x^{\alpha}(s)$, gives
\begin{align}
\left.  \frac{\partial S}{\partial\varepsilon} \right|  _{\varepsilon=0} ={}
&  \int_{a}^{b} \exp\left[  \int_{a}^{s} l\bigl(x(t),\dot{x}(t)\bigr)\,dt
\right]  \Bigg\{ \left[  \int_{a}^{s} \left(  \frac{\partial l}{\partial
x^{\alpha}} - \frac{d}{dt} \frac{\partial l}{\partial\dot{x}^{\alpha}}
\right)  _{\!t} \xi^{\alpha}(t)\,dt \right]  L_{s}\nonumber\\
&  \hspace{1.5cm} + \left[  \frac{\partial L}{\partial x^{\alpha}} - \frac
{d}{ds} \left(  \frac{\partial L}{\partial\dot{x}^{\alpha}} \right)  +
\frac{\partial l}{\partial\dot{x}^{\alpha}}L - l\frac{\partial L}{\partial
\dot{x}^{\alpha}} \right]  _{\!s} \xi^{\alpha}(s) \Bigg\}\,ds .
\label{eq:first-variation-delta-appendix}%
\end{align}

In Appendix \ref{app:general-var}, we prove the result for an arbitrary
variation. For the moment, we consider a variation localized at $s=\lambda$:
\begin{equation}
\xi^{\alpha}(s) = v^{\alpha}\delta(s-\lambda),
\label{eq:delta-vector-variation}%
\end{equation}
where $v^{\alpha}$ is an arbitrary constant vector. Substitution into
\eqref{eq:first-variation-delta-appendix} yields
\begin{align}
\left.  \frac{\partial S}{\partial\varepsilon} \right|  _{\varepsilon=0} ={}
&  v^{\alpha}\int_{a}^{b} \exp\left[  \int_{a}^{s} l\bigl(x(t),\dot
{x}(t)\bigr)\,dt \right]  \Bigg\{ \left[  \int_{a}^{s} \left(  \frac{\partial
l}{\partial x^{\alpha}} - \frac{d}{dt} \frac{\partial l}{\partial\dot
{x}^{\alpha}} \right)  _{\!t} \delta(t-\lambda)\,dt \right]  L_{s}\nonumber\\
&  \hspace{1.5cm} + \left[  \frac{\partial L}{\partial x^{\alpha}} - \frac
{d}{ds} \left(  \frac{\partial L}{\partial\dot{x}^{\alpha}} \right)  +
\frac{\partial l}{\partial\dot{x}^{\alpha}}L - l\frac{\partial L}{\partial
\dot{x}^{\alpha}} \right]  _{\!s} \delta(s-\lambda) \Bigg\}\,ds .
\label{eq:delta-substitution}%
\end{align}
The first term can be expressed using the Heaviside step function:
\begin{align}
\int_{a}^{s} \left(  \frac{\partial l}{\partial x^{\alpha}} - \frac{d}{dt}
\frac{\partial l}{\partial\dot{x}^{\alpha}} \right)  _{\!t} \delta
(t-\lambda)\,dt  &  = \int_{a}^{b} \theta(s-t) \left(  \frac{\partial
l}{\partial x^{\alpha}} - \frac{d}{dt} \frac{\partial l}{\partial\dot
{x}^{\alpha}} \right)  _{\!t} \delta(t-\lambda)\,dt= \theta(s-\lambda) \left(
\frac{\partial l}{\partial x^{\alpha}} - \frac{d}{d\lambda} \frac{\partial
l}{\partial\dot{x}^{\alpha}} \right)  _{\!\lambda}. \label{eq:heaviside-delta}%
\end{align}
It follows that
\begin{align}
\left.  \frac{\partial S}{\partial\varepsilon} \right|  _{\varepsilon=0} ={}
&  v^{\alpha}\left(  \frac{\partial l}{\partial x^{\alpha}} - \frac
{d}{d\lambda} \frac{\partial l}{\partial\dot{x}^{\alpha}} \right)
_{\!\lambda} \int_{a}^{b} \theta(s-\lambda) \exp\left[  \int_{a}^{s}
l\bigl(x(t),\dot{x}(t)\bigr)\,dt \right]  L_{s}\,ds\nonumber\\
&  + v^{\alpha}\exp\left[  \int_{a}^{\lambda}l\bigl(x(t),\dot{x}(t)\bigr)\,dt
\right]  \Bigg[ \frac{\partial L}{\partial x^{\alpha}} - \frac{d}{d\lambda}
\left(  \frac{\partial L}{\partial\dot{x}^{\alpha}} \right)  + \frac{\partial
l}{\partial\dot{x}^{\alpha}}L - l\frac{\partial L}{\partial\dot{x}^{\alpha}}
\Bigg]_{\!\lambda}. \label{eq:delta-var-result}%
\end{align}
Since
\begin{equation}
\theta(s-\lambda)=
\begin{cases}
0, & s<\lambda,\\
1, & s>\lambda,
\end{cases}
\end{equation}
the integral containing the Heaviside function becomes
\begin{equation}
\int_{a}^{b} \theta(s-\lambda) \exp\left[  \int_{a}^{s} l\bigl(x(t),\dot
{x}(t)\bigr)\,dt \right]  L_{s}\,ds = \int_{\lambda}^{b} \exp\left[  \int
_{a}^{s} l\bigl(x(t),\dot{x}(t)\bigr)\,dt \right]  L_{s}\,ds.
\end{equation}
Imposing the extremality condition
\begin{equation}
\left.  \frac{\partial S}{\partial\varepsilon} \right|  _{\varepsilon=0} =0,
\end{equation}
and using the arbitrariness of $v^{\alpha}$, we find, for each component
$\alpha$,
\begin{align}
\Bigg[ \frac{\partial L}{\partial x^{\alpha}} - \frac{d}{d\lambda} \left(
\frac{\partial L}{\partial\dot{x}^{\alpha}} \right)  + \frac{\partial
l}{\partial\dot{x}^{\alpha}}L - l\frac{\partial L}{\partial\dot{x}^{\alpha}}
\Bigg]_{\!\lambda} ={}  &  \left[  \frac{d}{d\lambda} \left(  \frac{\partial
l}{\partial\dot{x}^{\alpha}} \right)  - \frac{\partial l}{\partial x^{\alpha}}
\right]  _{\!\lambda} \exp\left[  -\int_{a}^{\lambda}l\bigl(x(t),\dot
{x}(t)\bigr)\,dt \right] \nonumber\\
&  \times\int_{\lambda}^{b} \exp\left[  \int_{a}^{s} l\bigl(x(t),\dot
{x}(t)\bigr)\,dt \right]  L\,ds . \label{eq:eom-lambda}%
\end{align}

The last factor of the above equation can be written as $\int_{\lambda}^{b}
\exp\left[  \int_{\lambda}^{s} l\bigl(x(t),\dot{x}(t)\bigr)\,dt \right]
L_{s}\,ds$, which gives equation \eqref{eq:main}. So far, we have shown that
\eqref{eq:eom-lambda} is a necessary condition for extremizing the Perlick
functional. We now verify that it is also sufficient.

For an arbitrary variation vector field $\xi^{\alpha}(s)$, the first variation
is
\begin{align}
\left.  \frac{\partial S}{\partial\varepsilon} \right|  _{\varepsilon=0} ={}
&  \int_{a}^{b} \exp\left[  \int_{a}^{s} l\bigl(x(t),\dot{x}(t)\bigr)\,dt
\right]  \Bigg\{ \left[  \int_{a}^{s} \left(  \frac{\partial l}{\partial
x^{\alpha}} - \frac{d}{dt} \frac{\partial l}{\partial\dot{x}^{\alpha}}
\right)  _{\!t} \xi^{\alpha}(t)\,dt \right]  L_{s}\nonumber\\
&  \hspace{1.5cm} + \left[  \frac{\partial L}{\partial x^{\alpha}} - \frac
{d}{ds} \left(  \frac{\partial L}{\partial\dot{x}^{\alpha}} \right)  +
\frac{\partial l}{\partial\dot{x}^{\alpha}}L - l\frac{\partial L}{\partial
\dot{x}^{\alpha}} \right]  _{\!s} \xi^{\alpha}(s) \Bigg\}\,ds .
\label{eq:arbitrary-first-variation}%
\end{align}
Substituting the equations of motion \eqref{eq:eom-lambda}, we obtain
\begin{align}
\left.  \frac{\partial S}{\partial\varepsilon} \right|  _{\varepsilon=0} ={}
&  \int_{a}^{b} \exp\left[  \int_{a}^{s} l\bigl(x(t),\dot{x}(t)\bigr)\,dt
\right]  L_{s} \left[  \int_{a}^{s} \left(  \frac{\partial l}{\partial
x^{\alpha}} - \frac{d}{dt} \frac{\partial l}{\partial\dot{x}^{\alpha}}
\right)  _{\!t} \xi^{\alpha}(t)\,dt \right]  ds\nonumber\\
&  - \int_{a}^{b} \left[  \left(  \frac{\partial l}{\partial x^{\alpha}} -
\frac{d}{ds} \frac{\partial l}{\partial\dot{x}^{\alpha}} \right)  _{\!s}
\xi^{\alpha}(s) \right]  \int_{s}^{b} \exp\left[  \int_{a}^{u}
l\bigl(x(t),\dot{x}(t)\bigr)\,dt \right]  L_{u}\,du\,ds .
\label{eq:sufficiency-expanded}%
\end{align}
Define the functions
\begin{align}
F(s)  &  = \int_{a}^{s} \left(  \frac{\partial l}{\partial x^{\alpha}} -
\frac{d}{dt} \frac{\partial l}{\partial\dot{x}^{\alpha}} \right)  _{\!t}
\xi^{\alpha}(t)\,dt, & \frac{dF}{ds}  &  = \left(  \frac{\partial l}{\partial
x^{\alpha}} - \frac{d}{ds} \frac{\partial l}{\partial\dot{x}^{\alpha}}
\right)  _{\!s} \xi^{\alpha}(s),\\
G(s)  &  = -\int_{s}^{b} \exp\left[  \int_{a}^{u} l\bigl(x(t),\dot
{x}(t)\bigr)\,dt \right]  L_{u}\,du, & \frac{dG}{ds}  &  = \exp\left[
\int_{a}^{s} l\bigl(x(t),\dot{x}(t)\bigr)\,dt \right]  L_{s} .
\end{align}
Equation \eqref{eq:sufficiency-expanded} therefore becomes
\begin{align}
\left.  \frac{\partial S}{\partial\varepsilon} \right|  _{\varepsilon=0}=
\int_{a}^{b} \left[  \frac{dG}{ds}F(s) + G(s)\frac{dF}{ds} \right]  ds=
\int_{a}^{b} \frac{d}{ds}\bigl[F(s)G(s)\bigr]\,ds = F(b)G(b)-F(a)G(a) =0,
\end{align}
because $F(a)=0$ and $G(b)=0$, independently of the choice of $\xi^{\alpha
}(s)$.

Thus, satisfying the equations of motion \eqref{eq:eom-lambda} is both a
necessary and a sufficient condition for a trajectory to extremize the Perlick
functional \eqref{eq:new_perlick}.

\section{Extremization for a general variation}

\label{app:general-var}

\begin{proposition}
Let $\mathcal{U}\subset\mathbb{R}^{n}$ be an open set; $l:T\mathcal{U}%
\rightarrow\mathbb{R}$ and $L:\mathbb{R} \times T\mathcal{U}\rightarrow
\mathbb{R}$ be smooth functions; and $x_{0}$, $x_{1} \in\mathcal{U}$, $t_{0} <
t_{1}$ real numbers. Let $S$ be the functional defined on the set of all
smooth curves $\gamma:[t_{0},t_{1}]\rightarrow\mathcal{U}$ such that
$\gamma(t_{0})=x_{0}$ and $\gamma(t_{1})=x_{1}$, that is given by
\begin{equation}
S[\gamma]= \int_{t_{0}}^{t_{1}}L\Big( \int_{t_{0}}^{t}l(\gamma(s),\dot{\gamma
}(s)) \, ds , \gamma(t), \dot{\gamma}(t) \Big) \, dt,
\label{eq:general-functional}%
\end{equation}
where $(\gamma(t),\dot{\gamma}(t)) \in\mathcal{U}\times\mathbb{R}^{n} \cong
T\mathcal{U}$.

Then $\gamma$ is a critical point of $S$ if and only if it satisfies the
generalized Euler Lagrange equation:
\begin{equation}
\frac{\partial L}{\partial x^{a}} - \frac{d}{dt} \frac{\partial L}%
{\partial\dot{x}^{a}} +\frac{\partial L}{\partial l} \frac{\partial
l}{\partial\dot{x}^{a}} +\Big[ \frac{\partial l}{\partial x^{a}} - \frac
{d}{dt}\frac{\partial l}{\partial\dot{x}^{a}} \Big] \int_{t}^{t_{1}}
\frac{\partial L}{\partial l} ds =0, \label{eq:generalized-euler-lagrange}%
\end{equation}
for each $a\in\{1,2,...,n\}$, where $\frac{\partial L}{\partial l}$ denotes
the derivative of $L$ with respect to its first coordinate (where we put the
integral of $l$). In this equation, all the derivatives of $L$ are being
evaluated at $\Big( \int_{t_{0}}^{t}l(\gamma(r),\dot{\gamma}(r)) \, dr ,
\gamma(t), \dot{\gamma}(t) \Big)$, with the exception of the last term which
is being integrated, which is being evaluated at $\Big( \int_{t_{0}}%
^{s}l(\gamma(r),\dot{\gamma}(r)) \, dr , \gamma(s), \dot{\gamma}(s) \Big)$;
and all derivatives of $l$ are being evaluated at $(\gamma(t), \dot{\gamma
}(t))$.
\end{proposition}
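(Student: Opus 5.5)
We compute the first variation directly and then swap the order of integration, instead of using delta-function variations. Fix a smooth variation $\xi:[t_0,t_1]\to\mathbb{R}^n$ with $\xi(t_0)=\xi(t_1)=0$ and set $\gamma_\varepsilon=\gamma+\varepsilon\xi$. All such curves lie in $\mathcal{U}$ for small $|\varepsilon|$, because $\gamma([t_0,t_1])$ is compact and $\mathcal{U}$ is open. Write $\Lambda_\varepsilon(t)=\int_{t_0}^t l(\gamma_\varepsilon,\dot\gamma_\varepsilon)\,ds$.

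\emph{Step 1: differentiate.} Since everything is smooth on a compact interval, we may differentiate under the integral sign:
\[
\frac{d}{d\varepsilon}S[\gamma_\varepsilon]\Big|_{0}=\int_{t_0}^{t_1}\Big[\frac{\partial L}{\partial l}(t)\int_{t_0}^{t}\Big(\frac{\partial l}{\partial x^a}\xi^a+\frac{\partial l}{\partial \dot x^a}\dot\xi^a\Big)ds+\frac{\partial L}{\partial x^a}\xi^a+\frac{\partial L}{\partial\dot x^a}\dot\xi^a\Big]dt .
\]

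\emph{Step 2: integrate by parts.} In the inner integral, integrate by parts to get $\int_{t_0}^t\big(\frac{\partial l}{\partial x^a}-\frac{d}{ds}\frac{\partial l}{\partial\dot x^a}\big)\xi^a ds+\frac{\partial l}{\partial\dot x^a}(t)\xi^a(t)$; the lower boundary term vanishes because $\xi(t_0)=0$. The boundary term at $t$ produces the local term $\frac{\partial L}{\partial l}\frac{\partial l}{\partial\dot x^a}\xi^a(t)$. The outer $\dot\xi$ term is integrated by parts with both boundary terms vanishing.

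\emph{Step 3: Fubini.} The remaining double integral is $\int_{t_0}^{t_1}\!\int_{t_0}^{t}\frac{\partial L}{\partial l}(t)\,E_a(s)\xi^a(s)\,ds\,dt$, where $E_a=\frac{\partial l}{\partial x^a}-\frac{d}{ds}\frac{\partial l}{\partial\dot x^a}$. Exchanging the order of integration over the triangle $t_0\le s\le t\le t_1$ gives $\int_{t_0}^{t_1}E_a(s)\xi^a(s)\big(\int_s^{t_1}\frac{\partial L}{\partial l}(r)\,dr\big)ds$. Relabelling variables, the first variation becomes $\int_{t_0}^{t_1}\mathcal{E}_a(t)\xi^a(t)\,dt$, where $\mathcal{E}_a$ is exactly the left-hand side of \eqref{eq:generalized-euler-lagrange}, with evaluation points as stated. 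This is an integration-by-parts swap, the same as the $F G$ argument used in Appendix~\ref{app:details}.

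\emph{Step 4: conclude both directions.} Sufficiency is immediate: if $\mathcal{E}_a\equiv0$, the first variation vanishes for every $\xi$. For necessity, $\mathcal{E}_a$ is continuous in $t$, since the tail integral is a smooth function of $t$. Apply the fundamental lemma of the calculus of variations: take $\xi^a=\phi\,\mathcal{E}_a$ with $\phi\ge0$ a smooth bump supported in $(t_0,t_1)$. Then $\int\phi\,|\mathcal{E}|^2=0$, so $\mathcal{E}\equiv0$ on the open interval, and by continuity on the closed one. This replaces the heuristic delta variations of Appendix~\ref{app:details} with a rigorous argument.

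The main obstacle is bookkeeping rather than anything conceptual. One must track where each derivative of $L$ is evaluated: at $t$, or at the integration variable inside the tail $\int_t^{t_1}$. One must also justify the Fubini exchange and the differentiation under the integral. These are routine here because all integrands are continuous on compact sets. Finally, one should note that $\frac{d}{dt}\frac{\partial L}{\partial\dot x^a}$ contains the chain-rule term $\frac{\partial^2L}{\partial l\,\partial\dot x^a}\,l$. This term reproduces the $-l\,\partial L/\partial\dot x$ term of \eqref{eq:main} when $L=e^{\Lambda}L_0$, which gives a consistency check against Eq.~\eqref{eq:main}.
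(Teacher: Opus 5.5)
Your argument is correct, but it takes a different route from the paper's proof of this Proposition. The two proofs share the differentiation and the integrations by parts; the difference lies in how the nonlocal double integral is handled.

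The paper never exchanges the order of integration. Instead it defines the truncated variation $h(t^*)$, which is the integrated-by-parts first variation with $t_0$ replaced by $t^*$ throughout. It proves $h\equiv 0$ by applying criticality to the cut-off variations $\eta_k=f_k\eta$ and letting $k\to\infty$. It then differentiates in $t^*$ to get $h'(t^*)=\pm\,\mathcal{E}_a(t^*)\eta^a(t^*)$, which serves as the rigorous stand-in for the delta variation of Appendix~\ref{app:details}. The converse follows from $h'\equiv 0$ together with $h(t)\to 0$ as $t\to t_1$.

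Your Fubini step makes explicit what that argument uses implicitly: by your Step~3, $h(t^*)=\int_{t^*}^{t_1}\mathcal{E}_a\eta^a\,dt$. What your route buys is a single identity, first variation $=\int_{t_0}^{t_1}\mathcal{E}_a\xi^a\,dt$. Sufficiency is immediate from it, and necessity is the standard fundamental lemma. You need no limiting argument over cut-off functions, which is the step the paper leaves as an exercise. The paper's version, in exchange, stays closer to the localized-variation intuition, and the tail integral $\int_t^{t_1}\partial L/\partial l$ appears naturally when $h$ is differentiated in $t^*$.

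Two small points of precision in your Step~4:
\begin{enumerate}
\item $\xi^a=\phi\,\mathcal{E}_a$ is an admissible \emph{smooth} variation only because $\mathcal{E}_a$ is smooth, not merely continuous. This does hold, since $\gamma$, $l$ and $L$ are smooth.
\item To conclude $\mathcal{E}\equiv 0$ on all of $(t_0,t_1)$ from a single test function, take $\phi>0$ on the whole open interval, for example $\phi(t)=(t-t_0)(t_1-t)$. Alternatively, let the bump vary.
\end{enumerate}
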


\begin{proof}
Let $\eta:[t_0,t_1] \rightarrow \mathbb{R}^n$ be a smooth curve such that $\eta(t_0)=\eta(t_1)=0$. For small enough $\epsilon \in \mathbb{R}$ the image of the function $\gamma + \epsilon \eta$ lies in $\mathcal{U}$. We need to show that for an arbitrary $\eta$
\begin{equation}
\frac{\partial S[\gamma+\epsilon\eta]}{\partial \epsilon}\Big|_{\epsilon = 0} = 0
\label{eq:critical-point-condition}
\end{equation}
if and only if \eqref{eq:generalized-euler-lagrange} holds.
First, let us assume \eqref{eq:critical-point-condition} and prove \eqref{eq:generalized-euler-lagrange}. Using the chain rule and the rules of differentiating functions defined by integrals we can rewrite \eqref{eq:critical-point-condition}, using the definition \eqref{eq:general-functional}, as
$$
\frac{\partial}{\partial \epsilon}
\int_{t_0}^{t_1}L\Big( \int_{t_0}^tl(\gamma(s)+\epsilon \eta(s),\dot{\gamma}(s)+\epsilon \eta'(s)) \, ds , \gamma (t)+\epsilon \eta(t), \dot{\gamma}(t)+\epsilon \eta'(t) \Big) \, dt
\Big|_{\epsilon=0} =
$$
\begin{equation}
\int_{t_0}^{t_1}
\Big\{
\frac{\partial L}{\partial l}
\Big[ \int_{t_0}^t\frac{\partial l}{\partial x^a}\eta^a + \frac{\partial l}{\partial \dot{x}^a} \eta'^a \, ds \Big]
+\frac{\partial L}{\partial x^a}\eta^a+ \frac{\partial L}{\partial \dot{x}^a}\eta'^a
\Big\} \, dt.
\label{eq:first-variation-general}
\end{equation}
Now, we use integral by parts to get rid of all $\eta'^a$. With respect to the $\eta'^a$ outside the inner integral, everything is just as usual, since the outer integral has extremes $t_0$ and $t_1$, where $\eta$ is zero. With respect to the $\eta'^a$ inside the inner integral, this process will give us an extra term, since $\eta$ doesn't have to be zero at $t$. So, \eqref{eq:first-variation-general} can be rewritten as
\begin{equation}
\int_{t_0}^{t_1}
\Big\{
\frac{\partial L}{\partial l}
\Big[ \int_{t_0}^t\frac{\partial l}{\partial x^a}\eta^a -\frac{d}{ds}\Big( \frac{\partial l}{\partial \dot{x}^a}\Big) \eta^a \, ds \Big]
+\frac{\partial L}{\partial l} \frac{\partial l}{\partial \dot{x}^a} \eta ^a
+\frac{\partial L}{\partial x^a}\eta^a- \frac{d}{dt} \Big(\frac{\partial L}{\partial \dot{x}^a} \Big)\eta^a
\Big\} \, dt.
\label{eq:first-variation-after-parts}
\end{equation}
At this point, we observe that if we set $\eta^a$ to be equal to a Dirac delta and $\eta^b$, for $b \neq a$ to be zero, we would obtain \eqref{eq:generalized-euler-lagrange} directly. We proceed analytically.
Define the function $h:[t_0,t_1) \rightarrow \mathbb{R}$ by
$$
h(t^*)=
$$
\begin{equation}
\int_{t^*}^{t_1}
\Big\{
\frac{\partial L}{\partial l}
\Big[ \int_{t^*}^t\frac{\partial l}{\partial x^a}\eta^a -\frac{d}{ds}\Big( \frac{\partial l}{\partial \dot{x}^a}\Big) \eta^a \, ds \Big]
+\frac{\partial L}{\partial l} \frac{\partial l}{\partial \dot{x}^a} \eta ^a
+\frac{\partial L}{\partial x^a}\eta^a- \frac{d}{dt} \Big(\frac{\partial L}{\partial \dot{x}^a} \Big)\eta^a
\Big\} \, dt.
\label{eq:def-h-general}
\end{equation}
Which is just the variation at \eqref{eq:first-variation-after-parts} with $t^*$ instead of $t_0$. We assert that $h(t^*)\equiv 0$. Indeed, let $t^* \in [t_0,t_1)$, since we are assuming \eqref{eq:critical-point-condition} for arbitrary $\eta$, we might as well take, for any natural number $k$, $\eta_k$ instead of $\eta$, where $\eta_k = f_k \eta$, and $f_k:[t_0,t_1]\rightarrow\mathbb{R}$ is a smooth function such that there is a $r_k \in (t^*,t_1)$, $r_k-t^*<1/k$, so that $f$ is given by (figure 1):
\begin{enumerate}
\item If $t \in [t_0,t^*]$ then $f(t)=0$;
\item if $t \in (t^*,r_k)$ then $0 \leq f(t) \leq 1$;
\item if $t \in [r_k,t_1]$ then $f(t) = 1$.
\end{enumerate}
By \eqref{eq:critical-point-condition}, \eqref{eq:first-variation-after-parts}, and definition of $\eta_k$,
$$
0=
$$
$$
\int_{t_0}^{t_1}
\Big\{
\frac{\partial L}{\partial l}
\Big[ \int_{t_0}^t\frac{\partial l}{\partial x^a}\eta^a_k -\frac{d}{ds}\Big( \frac{\partial l}{\partial \dot{x}^a}\Big) \eta^a_k \, ds \Big]
+\frac{\partial L}{\partial l} \frac{\partial l}{\partial \dot{x}^a} \eta ^a_k
+\frac{\partial L}{\partial x^a}\eta^a_k- \frac{d}{dt} \Big(\frac{\partial L}{\partial \dot{x}^a} \Big)\eta^a_k
\Big\} \, dt=
$$
\begin{equation}
\int_{t^*}^{t_1}
\Big\{
\frac{\partial L}{\partial l}
\Big[ \int_{t^*}^t\frac{\partial l}{\partial x^a}\eta^a_k -\frac{d}{ds}\Big( \frac{\partial l}{\partial \dot{x}^a}\Big) \eta^a_k \, ds \Big]
+\frac{\partial L}{\partial l} \frac{\partial l}{\partial \dot{x}^a} \eta ^a_k
+\frac{\partial L}{\partial x^a}\eta^a_k- \frac{d}{dt} \Big(\frac{\partial L}{\partial \dot{x}^a} \Big)\eta^a_k
\Big\} \, dt.
\label{eq:localized-variation-general}
\end{equation}
Taking the limit as $k \rightarrow \infty$, $\eta_k \rightarrow \eta$ on $(t^*,t_1]$. It is an exercise on topology of metric spaces to show that the limit of \eqref{eq:localized-variation-general} as $k$ goes to infinity is precisely $h(t^*)$. Taking the derivative of $h$ on one hand we have zero, on the other we have the derivative of \eqref{eq:def-h-general} with respect to $t^*$. We arrive at
\begin{equation}
h'(t)=
\Big\{
\frac{\partial L}{\partial x^a} - \frac{d}{dt} \frac{\partial L}{\partial \dot{x}^a}
+\frac{\partial L}{\partial l} \frac{\partial l}{\partial \dot{x}^a}
+\Big[
\frac{\partial l}{\partial x^a} - \frac{d}{dt}\frac{\partial l}{\partial \dot{x}^a}
\Big]
\int_t^{t_1} \frac{\partial L}{\partial l} ds \Big\} \eta^a(t) =0.
\label{eq:h-derivative-general}
\end{equation}
By the arbitrariness of $\eta^a$ we conclude \eqref{eq:generalized-euler-lagrange}.
Conversely, assume that $\gamma$ satisfies \eqref{eq:generalized-euler-lagrange}, then by the first equality in \eqref{eq:h-derivative-general}, $h'(t)$ has to be zero. So $h$ is constant. But by \eqref{eq:def-h-general} we have
$$
\lim_{t \to t_1} h(t) = 0,
$$
hence, $h\equiv 0$, which implies that $h(t_0)=0$. By \eqref{eq:def-h-general}, \eqref{eq:first-variation-after-parts} and \eqref{eq:critical-point-condition}, $\gamma$ is a critical point of $S$, as we wanted to show.
\end{proof}

\bibliographystyle{utphys}
\bibliography{bib-weyl}

\end{document}